\documentclass[11pt,a4paper]{article}
\pdfoutput=1
\synctex=1

% rubber: module pdftex

\usepackage[margin=1in]{geometry}
\usepackage{amsmath,amssymb,amsthm,booktabs,color,doi,enumitem,graphicx,url}
\usepackage{thmtools, thm-restate}
\usepackage{stmaryrd}
\usepackage[numbers,sort&compress]{natbib}
\usepackage{microtype}
\usepackage{hyperref}

\urlstyle{same}

\newtheorem{theorem}{Theorem}

\theoremstyle{definition}
\newtheorem{definition}{Definition}

\theoremstyle{remark}
\newtheorem{remark}{Remark}

\newcommand{\ldo}{\mathsf{LDO}}
\newcommand{\ld}{\mathsf{LD}}

\newcommand{\id}{\text{\sf id}}

\newcommand{\F}{\mathcal{F}}

\hypersetup{
    colorlinks=true,
    linkcolor=black,
    citecolor=black,
    filecolor=black,
    urlcolor=[rgb]{0,0.1,0.5},
    pdfauthor={Pierre Fraigniaud, Juho Hirvonen, Jukka Suomela},
    pdftitle={Node Labels in Local Decision},
}

\newenvironment{myabstract}
               {\list{}{\listparindent 1.5em%
                        \itemindent    \listparindent
                        \leftmargin    0pt
                        \rightmargin   0pt
                        \parsep        0pt}%
                \item\relax}
               {\endlist}

\newenvironment{mycover}
               {\list{}{\listparindent 0pt
                        \itemindent    \listparindent
                        \leftmargin    0pt
                        \rightmargin   0pt
                        \parsep        0pt}%
                \raggedright
                \item\relax}
               {\endlist}

\begin{document}

\vspace*{2ex}
\begin{mycover}
{\huge \bfseries Node Labels in Local Decision\par}
\bigskip
\bigskip

\textbf{Pierre Fraigniaud}

\nolinkurl{pierre.fraigniaud@liafa.univ-paris-diderot.fr}
\medskip

{\small Theoretical Computer Science Federation \\ CNRS and University Paris Diderot, France\par}

\bigskip

\textbf{Juho Hirvonen}

\nolinkurl{juho.hirvonen@aalto.fi}
\medskip

{\small Helsinki Institute for Information Technology HIIT, \\ Department of Computer Science, Aalto University, Finland\par}
\bigskip

\textbf{Jukka Suomela}

\nolinkurl{jukka.suomela@aalto.fi}
\medskip

{\small Helsinki Institute for Information Technology HIIT, \\ Department of Computer Science, Aalto University, Finland\par}

\end{mycover}

\bigskip
\begin{myabstract}
\noindent\textbf{Abstract.}
The role of unique node identifiers in network computing is well understood as far as \emph{symmetry breaking} is concerned. However, the unique identifiers also \emph{leak information} about the computing environment---in particular, they provide some nodes with information related to the size of the network. It was recently proved that in the context of \emph{local decision}, there are some decision problems such that (1)~they cannot be solved without unique identifiers, and (2)~unique node identifiers leak a \emph{sufficient} amount of information such that the problem becomes solvable (PODC 2013).

In this work we give study what is the \emph{minimal} amount of information that we need to leak from the environment to the nodes in order to solve local decision problems. Our key results are related to \emph{scalar oracles} $f$ that, for any given $n$, provide a multiset $f(n)$ of $n$ labels; then the adversary assigns the labels to the $n$ nodes in the network. This is a direct generalisation of the usual assumption of unique node identifiers. We give a complete characterisation of the \emph{weakest oracle} that leaks at least as much information as the unique identifiers.

Our main result is the following dichotomy: we classify scalar oracles as \emph{large} and \emph{small}, depending on their asymptotic behaviour, and show that (1)~any large oracle is at least as powerful as the unique identifiers in the context of local decision problems, while (2)~for any small oracle there are local decision problems that still benefit from unique identifiers.
\end{myabstract}

\thispagestyle{empty}
\setcounter{page}{0}
\newpage

\section{Introduction}

This work studies the role of \emph{unique node identifiers} in the context of \emph{local decision problems} in distributed systems. We generalise the concept of node identifiers by introducing \emph{scalar oracles} that choose the labels of the nodes, depending on the size of the network $n$---in essence, we let the oracle leak some information on $n$ to the nodes---and ask what is the \emph{weakest} scalar oracle that we could use instead of unique identifiers. We prove the following dichotomy: we classify each scalar oracle as \emph{small} or \emph{large}, depending on its asymptotic behaviour, and we show that the large oracles are precisely those oracles that are at least as strong as unique identifiers.

\subsection{Context and background}

The  research trends within the framework of distributed computing are most often pragmatic. Problems closely related to real world applications are tackled under computational assumptions reflecting existing systems, or systems whose future existence is plausible. Unfortunately, small variations in the model settings may lead to huge gaps in terms of computational power. Typically, some problems are unsolvable in one model but may well be efficiently solvable in a slight variant of that model. In the context of \emph{network computing}, this commonly happens depending on whether the model assumes that pairwise distinct identifiers are assigned to the nodes. While the presence of distinct identifiers is inherent to some systems (typically, those composed of artificial devices), the presence of such identifiers is questionable in others (typically, those composed of biological or chemical elements). Even if the identifiers are present, they may not necessarily be directly visible, e.g., for privacy reasons.

The absence of identifiers, or the difficulty of accessing the identifiers, limits the power of computation. Indeed, it is known that the presence of identifiers ensures two crucial properties, which are both used in the design of efficient algorithms. One such property is \textbf{symmetry breaking}. The absence of identifiers makes symmetry breaking far more difficult to achieve, or even impossible if asymmetry cannot be extracted from the inputs of the nodes, from the structure of the network, or from some source of random bits. The role of the identifiers in the framework of network computing, as far as symmetry breaking is concerned, has been investigated in depth, and is now well understood \cite{angluin80local,boldi01effective,chalopin06groupings,czygrinow08fast,diks95anonymous,emek14anonymous,emek14revocable,fraigniaud01labels,goos13local-approximation,hasemann14scheduling,hella15weak-models,lenzen08leveraging,linial92locality,mayer95local,naor95what,norris94classifying-anonymous,yamashita96computing,yamashita99leader,fich03hundreds,kranakis96symmetry,suomela13survey}.

The other crucial property of the identifiers is their ability to \textbf{leak global information} about the framework in which the computation takes place. In particular, the presence of pairwise distinct identifiers guarantees that at least one node has an identifier at least~$n$ in $n$-node networks. This apparently very weak property was proven to actually play an important role when one is interested in checking the correctness of a system configuration  in a decentralised manner. Indeed, it was shown in prior work~\cite{fraigniaud13ld-id} that the ability to check the legality of a system configuration with respect to some given Boolean predicate differs significantly according to the ability of the nodes to use their identifiers. This phenomenon is of a nature different from symmetry breaking, and is far less understood than the latter. 

More precisely, let us define a \emph{distributed language} as a set of system configurations (e.g., the set of properly coloured networks, or the set of networks each with a unique leader). Then let $\ld$ be the class of distributed languages that are \emph{locally decidable}. That is, $\ld$ is the set of distributed languages for which there exists a distributed algorithm where every node inspects its neighbourhood at constant distance in the network, and outputs \emph{yes} or \emph{no} according to the following rule:  all nodes output \emph{yes} if and only if the instance is legal. Equivalently, the instance is illegal if and only if at least one node outputs \emph{no}. Let $\ldo$ be defined as $\ld$ with the restriction the local algorithm is required to be \emph{identifier oblivious}, that is, the output of every node is the same regardless of the identifiers assigned to the nodes. By definition, $\ldo\subseteq \ld$, but~\cite{fraigniaud13ld-id} proved that this inclusion is strict: there are languages in $\ld\setminus\ldo$. This strict inclusion was obtained by constructing a distributed language that can be decided by an algorithm whose outputs depend heavily on the identifiers assigned to the nodes, and in particular on the fact that at least one node has an identifier whose value is at least~$n$.

The gap between $\ld$ and $\ldo$ has little to do with symmetry breaking. Indeed, decision tasks do not require that some nodes act differently from the others: on legal instances, all nodes must output \emph{yes}, while on illegal instances, it is permitted (but not required) that all nodes output \emph{no}. The gap between $\ld$ and $\ldo$ is entirely due to the fact that the identifiers leak information about the size $n$ of the network. Moreover, it is known that the gap between $\ld$ and $\ldo$ is strongly related to computability issues: there is an identifier-oblivious \emph{non-computable} simulation $A'$ of every local algorithm $A$ that uses identifiers to decide a distributed language~\cite{fraigniaud13ld-id}. Informally, for every language in $\ld\setminus\ldo$, the unique identifiers are precisely as helpful as providing the nodes with the capability of solving undecidable problems.

\subsection{Objective}

One objective of this paper is to measure the \emph{amount of information} provided to a distributed system via the labels given to its nodes. For this purpose, we consider  the classes $\ld$ and  $\ldo$ enhanced with \emph{oracles}, where an oracle $f$ is a function that provides every node with information about its environment.

We focus on the class of \emph{scalar} oracles, which are functions over the positive integers. Given an $n\geq 1$, a scalar oracle $f$ returns a list $f(n)=(f_1,\dots,f_n)$ of $n$ labels (bit strings) that are assigned arbitrarily to the nodes of any $n$-node network in a one-to-one manner. The class $\ld^f$ (resp., $\ldo^f$) is then defined as the class of distributed languages  decidable locally by an algorithm (resp., by an identifier-oblivious algorithm) in networks labelled with oracle~$f$.

If, for every $n\geq 1$, the $n$ values in the list $f(n)$ are pairwise distinct, then $\ld \subseteq \ldo^f$ since the nodes can use the values provided to them by the oracle as identifiers. However, as we shall demonstrate in the paper, this pairwise distinctness condition is not necessary.

Our goal is to identify the interplay between the classes $\ld$, $\ldo$, $\ld^f$, and $\ldo^f$, with respect to any scalar oracle~$f$, and to characterise  the power of identifiers in distributed systems as far as leaking information about the environment is concerned. 

\subsection{Our results}

Our first result is a characterisation of the weakest oracles providing the same power as unique node identifiers. We say that a scalar oracle $f$ is \emph{large} if, roughly, $f$ ensures that, for any set of $k$ nodes, the largest value provided by $f$ to the nodes in this set grows with $k$ (see Section~\ref{ssec:def-oracle} for the precise definition). We show the following theorem.
\begin{theorem} \label{thm:main}
    For any computable scalar oracle $f$, we have $\ldo^f = \ld^f$ if and only if $f$ is large.
\end{theorem}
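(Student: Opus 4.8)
\medskip
\noindent\textbf{Proof plan.}
Since an identifier-oblivious algorithm is in particular an algorithm, $\ldo^f\subseteq\ld^f$ for every $f$, so the theorem splits into: (i)~if $f$ is large then $\ld^f\subseteq\ldo^f$, and (ii)~if $f$ is small then $\ld^f\setminus\ldo^f\neq\emptyset$. One subtlety to keep in mind for (ii): a small oracle may still hand a single node a lot of information (e.g.\ $f(n)=(n,1,1,\dots,1)$ is small, yet lets one node read off $n$), so (ii) cannot simply invoke the separating language of \cite{fraigniaud13ld-id}; the witness for (ii) must be tailored to~$f$.

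For (i), I would simulate identifiers obliviously. Fix $\mathcal{L}\in\ld^f$ decided by a local algorithm $A$ of radius~$t$ that reads the identifiers and the oracle labels, and build an identifier-oblivious algorithm $A'$ of radius~$t$: at a node $v$, inspect the radius-$t$ ball around $v$ with its inputs and oracle labels, and enumerate the \emph{completions} of this local view, i.e.\ the triples $(m,G',\rho)$ where $G'$ is an $m$-node input-labelled network, $\rho$ is a bijection between its nodes and the multiset $f(m)$, and some vertex $u$ of $G'$ has a radius-$t$ ball (with inputs and with $\rho$'s labels) isomorphic to the one seen at $v$; also enumerate all assignments of pairwise distinct identifiers to that ball, and let $A'$ accept at $v$ iff \emph{some} such completion-with-identifiers makes $A$ accept at~$u$. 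On a legal instance the genuine network, labels and identifiers witness acceptance everywhere, so $A'$ accepts. The content is the illegal case, and this is where largeness enters: the defining property of a large oracle (Section~\ref{ssec:def-oracle}) says that among any $k$ nodes some carries a label at least $h(k)$ for an unbounded nondecreasing $h$ --- equivalently, the number of nodes ever carrying a label $\le L$ is bounded by a quantity depending on $L$ only, never on $n$, so there is no large ``label-poor'' region. This is precisely the abstraction of the identifier property exploited in \cite{fraigniaud13ld-id} (any $k$ pairwise distinct identifiers include one that is $\ge k$). Using it, one argues that if $A'$ accepted everywhere on an illegal $n$-node instance, then a compactness argument --- kept combinatorially finite by largeness --- would produce a valid global identifier assignment under which $A$ accepts everywhere on that instance, contradicting that $A$ decides $\mathcal{L}$. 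Since $f$ is computable, this $A'$ is a genuine distributed algorithm, so $\mathcal{L}\in\ldo^f$.

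For (ii), I would first distil the hypothesis: if $f$ is small then, by pigeonhole over the finitely many label values below the threshold witnessing smallness, there is a single value $a$ and an infinite set $N\subseteq\mathbb{N}$ with the multiplicity of $a$ in $f(n)$ tending to infinity as $n\to\infty$ through~$N$. Then take a language $\mathcal{L}^\star$ built on the template of \cite{fraigniaud13ld-id}, so that $\mathcal{L}^\star\in\ld$ precisely because in an $n$-node network with pairwise distinct identifiers some node has an identifier $\ge n$ and can thereby certify the property at stake; hence $\mathcal{L}^\star\in\ld\subseteq\ld^f$. To see $\mathcal{L}^\star\notin\ldo^f$, take any identifier-oblivious algorithm $B$ with oracle $f$ and radius~$t$, pick $n\in N$ with more than $\Delta^t$ copies of $a$ in $f(n)$ (where $\Delta$ bounds the degree), and build a size-$n$ instance in which the $a$-labels are placed so that the node $B$ would need in order to ``detect that $n$ is large'' sees only the label $a$ throughout its radius-$t$ neighbourhood; that node then behaves exactly as it would in a suitably chosen smaller network with the same local picture, so $B$ fails to separate the corresponding YES- and NO-instances.

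\medskip
\noindent\textbf{Main obstacle.}
I expect the crux to be the illegal case of~(i): showing that ``$A'$ accepts everywhere'' forces the existence of a single consistent network size, oracle labelling and identifier assignment, rather than just a patchwork of mutually incompatible local witnesses. This is exactly the point at which largeness is needed and not merely convenient --- on a small oracle such a patchwork is realisable, which is what makes (ii) go through --- so pushing the assembly argument through with only the bound ``among any $k$ nodes one has label $\ge h(k)$'' is the heart of the matter; a secondary issue is keeping the construction uniform and effective so that $A'$ qualifies as an algorithm for $\ldo^f$.
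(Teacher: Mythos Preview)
Your plan for (ii) is essentially the paper's approach (Lemma~\ref{lem:weak-separation}); the one detail you gloss over is that the witnessing language must have instances of the exact sizes $n$ on which the constant label $a$ proliferates, which the paper arranges by attaching a padding ``tail'' to the Turing-machine construction.

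Part (i), however, has a genuine gap, and it is precisely the obstacle you flag. Your rule ``$A'$ accepts at $v$ iff \emph{some} completion-with-identifiers makes $A$ accept at $u$'' quantifies in the wrong direction. Since $A$ is local, its output at $u$ depends only on the radius-$t$ ball, which by hypothesis coincides with $v$'s actual ball; the ``completion'' $(m,G',\rho)$ therefore plays no role and your rule collapses to: accept at $v$ iff \emph{some} local identifier assignment to $v$'s ball makes $A$ accept. On an illegal instance this would require some node to reject under \emph{every} local identifier assignment, but that need not happen. Take the very language $\mathcal{L}^\star$ you invoke for (ii): in its $\ld$ verifier, the node that says \emph{no} is one whose identifier is large enough to finish simulating $M$, and for any fixed $v$ one can give all of $v$'s ball small identifiers, under which $A$ accepts at $v$. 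Your $A'$ would then accept everywhere on a no-instance. No compactness rescues this: the impossibility of gluing per-node identifier witnesses into a global one is exactly the content of $\ldo\subsetneq\ld$, and nothing in your rule ties the identifiers to the oracle labels in a way that lets largeness intervene.

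The paper's argument (Lemma~\ref{lem:strong-ldof-ldf}) flips the quantifier. From a large computable $f$ one first extracts, for each label value $\ell$, a computable upper bound $g(\ell)$ on the rank of any node carrying~$\ell$ (this is where largeness is used and why computability of $f$ matters). Each node then sets $g_v^*=\max_u g(\ell_u)$ over its ball and simulates $A$ on \emph{all} identifier assignments to the ball drawn from $\{1,\dots,g_v^*\}$, outputting \emph{no} if \emph{any} simulation does. On yes-instances $A$ accepts under every identifier assignment, so every simulation accepts. On no-instances the globally consistent assignment ``$i$th smallest label $\mapsto i$'' lies inside every node's finite simulated set because $g(\ell_v)$ dominates the rank of $v$; hence the node that $A$ rejects under that assignment also rejects in~$A'$. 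There is no gluing: one global assignment is exhibited in advance, and largeness serves only to make each node's enumeration finite while guaranteed to contain it.
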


Theorem~\ref{thm:main} is a consequence of the following two lemmas. The first says that small oracles (i.e.\ non-large oracles) do not capture the power of unique identifiers. Note that the following separation result holds for any small oracle, including uncomputable oracles.
\begin{restatable}{lemma}{lemweakseparation} \label{lem:weak-separation}
    For any small oracle $f$, there exists a language $L \in \ld \setminus \ldo^{f}$.
\end{restatable}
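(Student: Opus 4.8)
The plan is to build, for any small oracle $f$, a distributed language $L$ that can be decided using genuine unique identifiers but cannot be decided by any identifier-oblivious algorithm equipped with $f$. The starting point is the separating language from \cite{fraigniaud13ld-id} witnessing $\ld \setminus \ldo \neq \emptyset$: that language is decided by exploiting the fact that in an $n$-node network some node necessarily carries an identifier of value at least $n$, so a node can locally ``detect'' whether the purported network size encoded in the inputs is a lie that is too small. I would take such a language $L$ (suitably packaged so that the relevant threshold behaviour is clean) and argue (1) $L \in \ld$ directly, by the identifier-based algorithm, and (2) $L \notin \ldo^f$, using that $f$ being small means the oracle fails to provide the analogous ``large value somewhere'' guarantee.

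The core of the argument is step (2). Being \emph{small} should mean (negating the informal definition of large) that there is a bound $B$ and arbitrarily large $n$ for which $f(n)$ contains a subset of $k$ labels all of whose values are $\le B$, with $k$ itself unbounded as $n$ grows — i.e.\ the multiset $f(n)$ can pile up many labels in a bounded range. I would use this to run an indistinguishability / cut-and-paste argument: take a yes-instance and a no-instance of $L$ that an identifier-oblivious algorithm $A$ with oracle $f$ must nonetheless treat consistently. Concretely, given a hypothetical local algorithm $A$ deciding $L$ with locality $t$ using oracle $f$, I pick $n$ large enough that $f(n)$ has many ($\gg$ the number of distinct radius-$t$ views $A$ can have under bounded labels) labels of bounded value; by pigeonhole two far-apart regions of a carefully constructed network receive label assignments that induce identical local views at every node, so $A$ outputs the same everywhere on both a legal and an illegal configuration — contradiction. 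The construction of the network should mirror the one used to separate $\ld$ from $\ldo$, with the size parameter tuned to the value of $n$ at which $f$ misbehaves.

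The main obstacle I anticipate is making the indistinguishability argument robust to the \emph{adversarial assignment} of labels: the oracle only hands over a multiset $f(n)$, and the adversary (here, us, since we are proving a lower bound, so actually this helps) places them — but we must also ensure the algorithm $A$ cannot use the \emph{large} labels in $f(n)$ (there may be a few) to locate itself. The fix is to place all large labels far from the ``glued'' regions and to exploit locality $t$: a node only sees labels within distance $t$, so if the bounded-value labels occupy a ball of radius $> 2t$ around each copy of the gadget, the large labels are invisible there. I would also need to handle computability carefully — but the lemma statement explicitly allows $f$ to be uncomputable, so the construction of $L$ must not rely on simulating $f$; $L$ is defined purely combinatorially, and only the \emph{upper bound} $L\in\ld$ needs a computable algorithm, which the identifier-based one is. Stitching these pieces — choice of $L$, the bad-$n$ extraction from smallness, the network gadget, and the view-collision pigeonhole — is the whole proof.
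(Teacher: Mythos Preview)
Your approach has a genuine gap: the pigeonhole/indistinguishability argument you sketch cannot succeed for \emph{any} language in $\ld$. The reason is that, as recalled in the paper's introduction, when the computability restriction on the local algorithm is dropped one has $\ldo=\ld$: every $L\in\ld$ admits an (uncomputable) identifier-oblivious decider. Adding oracle labels can only help, so every $L\in\ld$ is also decidable by an uncomputable $\ldo^f$-style algorithm, for every $f$. But a cut-and-paste / view-collision argument of the kind you describe would rule out \emph{all} local deciders, computable or not, since it exhibits a yes-instance and a no-instance with identical local views. Hence such an argument is too strong to be compatible with $L\in\ld$, and this is not a matter of choosing the gadget more carefully.

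You correctly observe that smallness lets the adversary flood the interesting region with copies of a single bounded label, and that the few large labels can be parked far away; the paper does exactly this (via an appended ``tail'' path that absorbs all non-constant labels). What you are missing is that the lower bound must then be a \emph{computability} argument, not a combinatorial one: assuming a computable $\ldo^f$-decider $A$ of locality $r$, one enumerates all finitely many possible radius-$r$ neighbourhoods near the pivot (all carrying the fixed label $i_0$), simulates $A$ on each, and thereby obtains a total sequential procedure that separates Turing machines halting with output $0$ from those halting with output $1$---an undecidable task. Your remark about computability is pointed in the wrong direction: the issue is not whether $f$ is computable (it need not be, and is never simulated), but that the hypothetical $A$ is computable, and this is precisely what the reduction exploits.
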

The second is a simulation result, showing that any local decision algorithm using identifiers can be simulated by an identifier-oblivious algorithm with the help of \emph{any} large oracle, as long as the oracle itself is computable. Essentially large oracles capture the power of unique identifiers.
\begin{restatable}{lemma}{lemstrongldofldf} \label{lem:strong-ldof-ldf}
    For any large computable oracle $f$, we have $\ld \subseteq \ldo^f = \ld^f$.
\end{restatable}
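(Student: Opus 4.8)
The inclusion $\ldo^f \subseteq \ld^f$ is trivial, so the content is $\ld \subseteq \ldo^f$ together with $\ld^f \subseteq \ldo^f$; in fact, since $\ld \subseteq \ld^f$ always holds, both parts follow from a single claim: any algorithm $A$ that uses identifiers and decides a language $L$ in networks labelled with $f$ can be converted into an identifier-oblivious algorithm $A'$ that decides $L$ with the help of the same large computable oracle $f$. The plan is to have $A'$, at each node $v$, read its constant-radius ball together with the oracle labels appearing in that ball, and then \emph{reconstruct} a plausible global assignment of identifiers that is consistent with what $v$ sees, feed this reconstruction to $A$, and output whatever $A$ outputs. The point of the largeness hypothesis is exactly that it lets $v$ certify an upper bound on $n$ (or rather, rule out that $n$ is ``too large'' relative to the labels $v$ has seen), which is what $A$'s identifier-based reasoning ultimately relied on in~\cite{fraigniaud13ld-id}.

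More concretely, I would argue as follows. Fix the constant radius $t$ of $A$. Largeness says that there is an unbounded function---call it $g$---such that in any $f$-labelled $n$-node network, every set of $k$ nodes contains a node whose oracle label, interpreted as an integer, is at least $g(k)$; equivalently, if every label seen among some $k$ nodes is below a threshold $m$, then $k < g^{-1}(m)$ is bounded. Node $v$ running $A'$ collects the multiset $S_v$ of oracle labels in its radius-$t$ ball, lets $m_v = \max S_v$, and uses computability of $f$ to search for the \emph{smallest} $n^\star$ such that the multiset $f(n^\star)$ is consistent with containing $S_v$ (i.e.\ $S_v \subseteq f(n^\star)$ as multisets) --- since $f$ is computable and, by largeness, only finitely many values of $n$ can have all of $S_v$'s labels bounded by $m_v$ while still ``fitting,'' this search terminates. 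Having fixed a candidate size $n^\star \ge$ (size of $v$'s ball), $v$ then deterministically extends its local view to a full legal-looking instance on $n^\star$ nodes with a concrete one-to-one identifier assignment drawn from, say, $\{1,\dots,\poly(n^\star)\}$, runs $A$ on that completed instance, and outputs $A$'s verdict at $v$. Because $A$ is correct on \emph{all} identifier assignments and \emph{all} network sizes, and because the completion agrees with the true instance on $v$'s radius-$t$ ball, $A$'s output at $v$ in the simulation equals its output on some legitimate run; hence on legal instances all nodes output \emph{yes}, and on illegal instances the node that witnesses illegality in the true instance still witnesses it here.

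The main obstacle---and the place where care is needed---is making the reconstruction \emph{globally coherent enough} that ``$A$ on the completed instance'' is actually a valid run of $A$ on a single well-defined instance, rather than each node hallucinating an incompatible world. The standard fix (as in the non-computable simulation of~\cite{fraigniaud13ld-id}) is that $A'$ does not need global coherence: it only needs that each node's local view in its hallucinated instance is \emph{realisable}, because $A$'s decision at a node depends only on that node's radius-$t$ view, and $\ld$-decision is a one-sided, node-local condition. So the real work is (i) showing that from $S_v$ and a guessed $n^\star$ one can always complete to \emph{some} graph on $n^\star$ vertices with a valid $f$-labelling and a valid identifier assignment extending $v$'s view --- here one must check that $n^\star$ can be taken large enough to leave room, and that largeness guarantees the search for a consistent $n^\star$ halts --- and (ii) verifying the correctness bookkeeping: a \emph{yes}-instance forces every node's true view to be extendable to a global \emph{yes}-instance for $A$ so every node outputs \emph{yes}, while a \emph{no}-instance has, by definition of $\ld$, at least one node whose radius-$t$ view already forces $A$ to say \emph{no} on \emph{every} completion, so that node says \emph{no} in the simulation too. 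I expect step (i), and in particular pinning down exactly how largeness is used to bound the search, to be the crux; the rest is routine.
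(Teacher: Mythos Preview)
Your proposal has a real gap in the \emph{no}-instance direction. You assert that ``a \emph{no}-instance has, by definition of $\ld$, at least one node whose radius-$t$ view already forces $A$ to say \emph{no} on \emph{every} completion.'' That is not what membership in $\ld$ gives you: it only says that for every identifier assignment \emph{some} node outputs \emph{no}, and the rejecting node may vary with the assignment. In fact, for the very language in $\ld\setminus\ldo$ from~\cite{fraigniaud13ld-id}, every radius-$t$ ball of a \emph{no}-instance $H(M,r)$ (with $M$ outputting $1$) also occurs in some \emph{yes}-instance---this is precisely what the fragment collection is engineered to achieve, and it is why the language is not in $\ldo$. Hence for each node $v$ there is an identifier assignment to its ball under which $A$ accepts at $v$, and a scheme that runs $A$ on a \emph{single} deterministically chosen completion can have every node land on an accepting one. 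The \emph{yes}-instance direction of your argument is fine; the \emph{no}-instance direction is not ``routine bookkeeping'' but the actual crux.

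The paper's proof repairs exactly this by abandoning the single-completion idea. From the large computable oracle one extracts, for each label $\ell$, the value $g(\ell)=\max\{\,i:\exists N,\ f_i^{(N)}\le \ell\,\}$, which is finite by largeness and computable from $f$. Node $v$ gathers its ball, sets $g_v^*=\max_u g(\ell_u)$, and simulates $A$ on \emph{every} assignment of distinct identifiers to its ball from $\{1,\dots,g_v^*\}$, outputting \emph{no} if any simulation does. On \emph{yes}-instances all simulations accept, since $A$ must accept under every identifier assignment. On \emph{no}-instances, the canonical global assignment that gives the node holding the $i$th smallest oracle label the identifier $i$ satisfies $\id(u)\le g(\ell_u)\le g_v^*$ for all $u$ in $v$'s ball, so this assignment is among those tried at every node; whichever node rejects under this canonical assignment therefore rejects in the simulation. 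Your search for a single $n^\star$ and a single extension cannot substitute for this universal quantification over identifier assignments; the role of largeness is not to bound a search for one consistent world but to bound the range $\{1,\dots,g_v^*\}$ so that the exhaustive enumeration is finite and still covers one globally coherent assignment.
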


Theorem~\ref{thm:main} holds despite the fact that small oracles can still produce some large values, and that there exist small oracles guaranteeing that, in any $n$-node network, at least one node has a value at least $n$. Such a small oracle would be sufficient to decide the language $L \in \ld \setminus \ldo$ presented in~\cite{fraigniaud13ld-id}. However, it is not sufficient to decide all languages in $\ld$.

Our second result is a complete description of the hierarchy of the four classes $\ld$, $\ldo$, $\ld^f$, and $\ldo^f$ of local decision, using identifiers or not,  with or without oracles. The pictures for small and large oracles are radically different. 
\begin{itemize}
\item  For any large oracle $f$, the hierarchy yields a \emph{total order}:
\[
\ldo \subsetneq \ld \subseteq \ldo^f=\ld^f.
\]
The strict inclusion $\ldo \subsetneq \ld$ follows from~\cite{fraigniaud13ld-id}. The second inclusion $\ld \subseteq \ldo^f$ may or may not be strict depending on oracle $f$. 

\item For any small oracle $f$, the hierarchy yields a \emph{partial order}. We have $\ldo^f\subsetneq \ld^f$ as a consequence of Lemma~\ref{lem:weak-separation}. However, $\ld$ and $\ldo^f$ are incomparable, in the sense that there is a language $L \in \ld \setminus \ldo^f$ for any small oracle $f$, and there is a language $L \in \ldo^f \setminus \ld$ for some small oracles $f$. Hence, the relationships of the four classes can be represented as the following diagram: 
\[
\begin{array}{c@{}c@{}c@{}c@{}c}
  && \ld^f && \\
  & \nearrow &&  \nwarrow & \\[3pt]
 \ldo^f  &&&& \ld \\
  & \nwarrow && \nearrow & \\
  && \ldo &&
\end{array}
\]
All inclusions (represented by arrows) can be strict. 
\end{itemize}

\subsection{Additional related work}

In the context of network computing, oracles and advice commonly appear in the form of \emph{labelling schemes} \cite{gavoille03compact,fraigniaud07distributed}. A typical example is a \emph{distance labelling scheme}, which is a labelling of the nodes so that the distance between any pair of nodes can be computed or approximated based on the labels. Other examples are \emph{routing schemes} that label the nodes with information that helps in finding a short path between any given source and destination. For graph problems, one could of course encode the entire solution in the advice string---hence the key question is whether a very small amount of advice helps with solving a given problem.

In prior work, it is commonly assumed that the oracle can give a specific piece of advice for each individual node. The advice is localised, and entirely controlled by the oracle. Moreover, the oracle can see the entire problem instance and it can tailor the advice for any given task.

In the present work, we study a much weaker setting: the oracle is only given $n$, and it cannot choose which label goes to which node. This is a generalisation of, among others, typical models of \emph{networks with unique identifiers}: one commonly assumes that the unique identifiers are a permutation of $\{1,2,\dotsc,n\}$ \cite{linial92locality}, which in our case is exactly captured by the large scalar oracle
\[
    f(n) = (1,2,\dotsc,n),
\]
or that the unique identifiers are a subset of $\{1,2,\dotsc,n^c\}$ for some constant $c$ \cite{peleg00distributed}, which in our case is captured by a subfamily of large scalar oracles. Our model is also a generalisation of \emph{anonymous networks with a unique leader} \cite{fraigniaud01labels}---the assumption that there is a unique leader is captured by the small scalar oracle
\[
    f(n) = (0,0,\dotsc,0,1).
\]

\section{Model and definitions}

In this work, we augment the usual definitions of \emph{locally checkable labellings} \cite{naor95what} and \emph{local distributed decision} \cite{fraigniaud11ld,fraigniaud12impact,fraigniaud13ld-id} with scalar oracles.

\subsection{Computational model}

We deal with the standard \textsf{LOCAL} model~\cite{peleg00distributed} for distributed graph algorithms. In this model, the network is a simple connected graph $G = (V,E)$. Each node $v\in V$ has an \emph{identifier} $\id(v)\in\mathbb{N}$, and all identifiers of the nodes in the network are pairwise distinct. Computation proceeds in synchronous rounds. During a round, each node communicates with its neighbours in the graph, and performs some local computation. There are no limits to the amount of communication done in a single round. Hence, in $r$ communication rounds, each node can learn the complete topology of its radius-$r$ neighbourhood, including the inputs and the identifiers of the nodes in this neighbourhood. In a distributed algorithm, all nodes start at the same time, and each node must halt after some number of rounds, and produce its individual output. The collection of individual outputs then forms the global output of the computation. The running time of the algorithm is the number of communication rounds until all nodes have halted. 

We consider \emph{local} algorithms, i.e., constant-time algorithms~\cite{suomela13survey}. That is, we focus on algorithms with a running time that does not depend on  the size $n$ of the graph. Any such algorithm, with running time $r$, can be seen as a function from the set of all possible radius-$r$ neighbourhoods to the set of all possible outputs. An \emph{identifier-oblivious} algorithm is an algorithm whose outputs are independent of the identifiers assigned to the nodes. Note that, from the perspective of an identifier-oblivious algorithm, the set of all possible radius-$r$ degree-$d$ neighbourhoods is finite. This is not the case for every algorithm since there are infinitely many identifier assignments to the nodes in a radius-$r$  degree-$d$ neighbourhood. 

Although the \textsf{LOCAL} model does not put any restriction on the amount of individual computation performed at each node, we only consider algorithms that are \emph{computable}. 

\subsection{Local decision tasks} 

We are interested in the power of constant-time algorithms for \emph{local decision}. A \emph{labelled graph} is a pair $(G, x)$, where $G$ is a simple connected graph, and $x:V(G)\to\{0,1\}^*$ is a function assigning a label to each node of $G$. A \emph{distributed language} $L$ is a set of labelled graphs.  Examples of distributed languages include:
\begin{itemize}[noitemsep]
    \item 2-colouring, the language where $G$ is a bipartite graph and $x(v) \in \{0,1\}$ for all $v \in V(G)$ such that $x(v) \neq x(u)$ whenever $\{u,v\} \in E(G)$;
    \item parity, the language of graphs with an even number of nodes;
    \item planarity, the language that consists of all planar graphs.
\end{itemize}

We say that algorithm $A$ decides $L$ if and only if the output of $A$ at every node is either  \emph{yes} or  \emph{no}, and, for every instance $(G,x)$, $A$ satisfies:  
\[(G,x) \in L  \iff \; \mbox{all nodes output \emph{yes}.}
\]
Hence, for an instance $(G,x)\notin L$, the algorithm $A$ must ensure that at least one node outputs \emph{no}. We consider two main distributed complexity classes: 
\begin{itemize}
\item $\ld$ (for \emph{local decision}) is the set of languages decidable by constant-time algorithms in the \textsf{LOCAL} model. 
\item $\ldo$  (for \emph{local decision oblivious}) is the set of languages decidable by constant-time identifier-oblivious algorithms in the \textsf{LOCAL} model.
\end{itemize}
By definition, $\ldo\subseteq \ld$, and it is known~\cite{fraigniaud13ld-id} that this inclusion is strict: there are languages $L\in\ld\setminus\ldo$. The fact that we consider only computable algorithms is crucial here---without this restriction we would have $\ldo = \ld$~\cite{fraigniaud13ld-id}.

\subsection{Distributed oracles}\label{ssec:def-oracle}

We study the relationship of classes $\ld$ and $\ldo$ with respect to \emph{scalar oracles}. Such an oracle $f$ is a function that assigns a list of $n$ values to every positive integer $n$, i.e., 
\[
f(n) = (f_1, f_2, \dots, f_n)
\] 
with $f_i\in\{0,1\}^*$. In essence, oracle $f$ can provide some information related to $n$ to the nodes. In an $n$-node graph, each of the $n$ nodes will receive a value $f_i\in f(n)$, $i\in[n]$. These values are arbitrarily assigned to the nodes in a one-to-one manner. Two different nodes will thus receive $f_i$ and $f_j$ with $i\neq j$. Note that $f_i$ may or may not be different from $f_j$ for $i\neq j$; this is up to the choice of the oracle. The way the values provided by the oracles are assigned to the nodes is under the control of an adversary.  One example of an oracle is $f(n)=(1,2,\dots,n)$, which provides the nodes with identifiers. Another example is $f(n)=(0,0,\dots,0)$, which provides no information to the nodes. 

W.l.o.g., let us assume that $f_i \le f_{i+1}$ for every $i$. We use the shorthand $f^{(n)}_k$ for the $k$th label provided by $f$ on input $n$, that is, $f(n) = (f^{(n)}_1, f^{(n)}_2, \dotsc, f^{(n)}_n)$. For a fixed oracle $f$, we consider two main distributed complexity classes: 
\begin{itemize}[noitemsep]
\item $\ld^f$ is the set of languages decidable by constant-time algorithms in networks that are labelled with oracle~$f$.
\item $\ldo^f$ is the set of languages decidable by constant-time indentifier-oblivious algorithms in networks that are labelled with oracle~$f$.
\end{itemize}
We will separate oracles in two classes, which play a crucial role in the way the four classes $\ldo$, $\ld$, $\ldo^f$, and $\ld^f$ interact.

\begin{definition}\em
An oracle $f$ is said to be \emph{large} if $$\forall c > 0,\, \exists k \geq 1,\, \forall n \ge k, \, f_k^{(n)} \ge c.$$ An oracle is \emph{small} if it is not large. 
\end{definition}

Hence, a large oracle $f$ satisfies that,  for any value $c > 0$, there exists a large enough $k$, such that, in every graph $G$ of size at least $k$, for every set of nodes $S\subseteq V(G)$ of size $|S|\geq k$, oracle $f$ is providing at least one node of $S$ with a value at least as large as $c$. In short: every large set of nodes must include at least one node that receives a large value. 

Conversely, a small oracle $f$ satisfies that there exists a value  $c>0$ such that, for every $k$, we can find $n\geq k$ such that, in every $n$-node graph $G$, and for every set of nodes $S\subseteq V(G)$ of size $|S|\geq k$, there is an assignment of the values provided by $f$ such that every node in $S$ receives a value smaller that~$c$.  In short: there are arbitrarily  large sets of nodes which all receive a small value. 

For example, oracles $f(n) = (1,2,\dotsc,n)$ and $f(n) = (n,n,\dotsc,n)$ are large, while oracles $f(n) = (0,0,\dotsc,0,1)$ and $f(n) = (0,0,\dotsc,0,2^n)$ are small. We emphasise that small oracles can output very large values.

\section{Proof of the main theorem}

In this section we give the proof of our main result that characterises the power of weak and large oracles with respect to identifier-oblivious local decision.

\subsection{Small oracles do not capture the power of unique identifiers} \label{ssec:lem-1}

Fraigniaud et al.~\cite{fraigniaud13ld-id} showed that there exists a language $L \in \ld \setminus \ldo$. We use a very similar Turing machine construction as in the proof of their Theorem~1. However, we must take into account the additional concern of the values that the oracle assigns to the nodes. We handle this by forcing any small oracle to always give many copies of the same constant label $c$ so that the adversary can cover the interesting parts of the construction with this unhelpful label $c$. We can then use uncomputability arguments to show that if a certain language were in $\ldo^f$, then we could get a sequential algorithm for uncomputable problems. See Figure~\ref{fig:lem-1} for illustrations.

\begin{figure}[t]
    \centering
    \includegraphics[page=2]{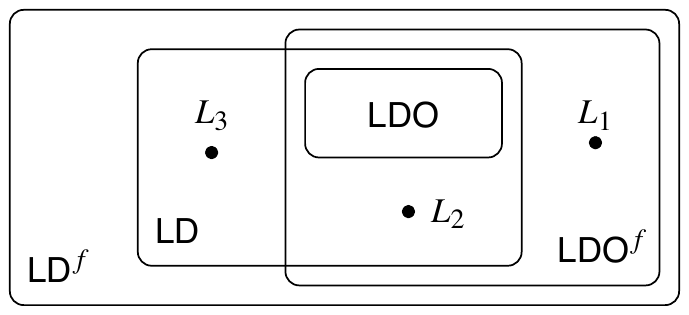}
    \caption{The construction of Section~\ref{ssec:lem-1}.}\label{fig:lem-1}
\end{figure}

\lemweakseparation*

\begin{proof}
We assume that for each halting Turing machine $M$ and each locality parameter $r \in \mathbb{N}$, there exists a labelled graph $H(M,r)$ with the following properties:
\begin{enumerate}[label=(P\arabic*),leftmargin=9mm]
    \item There is an identifier-oblivious local checker that verifies that a given labelled graph is a equal to $H(M,r)$ for some $M$ and $r$. \label{con:p1}
    \item The number of nodes in the graph $H(M,r)$ is at least as large as the number of steps $M$ takes on an empty tape. \label{con:p2}
    \item Given $H(M,r)$, an identifier-oblivious local checker $A$ with a running time of $r$ cannot decide if $M$ outputs $0$ or $1$. \label{con:p3}
    \item Each label of $H(M,r)$ is a triple $x(v) = (M,r,x'(v))$. The maximum degree of $H$ and the maximum size of $x'(v)$ are constants that only depend on $r$. \label{con:p4}
    \item Graph $H(M,r)$ can be padded with additional nodes without violating properties \ref{con:p1}--\ref{con:p4}.
\end{enumerate}

The construction of Fraigniaud et al.~\cite{fraigniaud13ld-id} satisfies these properties. They show how to construct a labelled graph $H(M,r)$ that encodes the execution table of a given Turing machine $M$ such that a local checker with running time $r$ cannot decide if $M$ halts with $0$ or $1$. The original construction $(H, x) = H(M,r)$ consists of three main parts. 
\begin{enumerate}[label=(\roman*),leftmargin=9mm]
    \item \emph{The execution table $T$ of the Turing machine $M$}. Let $s$ be the number of steps $M$ takes on an empty tape. Then table $T$ is an $(s+1) \times (s+1)$ grid, where node $(i,j)$ holds the contents of the tape at position $j$ after computation step $i$, and its own coordinates $(i,j)$ modulo~$3$. Node $(i,j)$ also knows if the head is at position $j$ after step $i$, and if so, what is the state of $M$ after step $i$. Node $(0,0)$ representing the first position of the empty tape is called the \emph{pivot}. The execution table exists essentially to guarantee \ref{con:p2}. \label{part-1}
    \item \label{part-2} \emph{The fragment collection $\F$.} This is a collection of subgrids labelled with all syntactically possible ways that are consistent with being in some execution table of $M$. The dimensions of the fragments are linear in $r$ and independent of $M$. In each fragment, every $2\times2$ subgrid is consistent with a state transition of $M$. It is crucial to observe that there is a finite number of such fragments. Each fragment is connected to the pivot in a way that supports the local verification of the structure. The fragment collection is added to ensure \ref{con:p3}. Informally, if we only had $T$, then some node $(i,s)$ at the last row of the grid would be able to see the stopping state of $M$; however, $\F$ will contain some fragments in which $M$ halts with output $0$ and some fragments in which $M$ halts with output $1$, and the nodes at the last row of $T$ are locally indistinguishable from the nodes in such fragments.
    \item \emph{Pyramid structure}. This is added to the execution table and to the fragments to ensure \ref{con:p1}. Without any additional structure, a grid with coordinates modulo $3$ is locally indistinguishable from, e.g., a grid that is wrapped into a torus. The pyramid structure guarantees that at least one node is able to detect invalid instances.
\end{enumerate}

Finally, since all labellings can be made constant-size, we can ensure \ref{con:p4}. In particular, for any $(M,r)$, there are constantly many syntactically possible $r$-neighbourhoods of $H(M,r)$. This is a crucial property as it guarantees that there is a sequential algorithm that on all inputs $(M,r)$ halts and, if $M$ halts, outputs all possible labelled $r$-neighbourhoods of $H(M,r)$.

Let $S(a,b)$ be the labelled path $(s_a, s_{a+1}, \dots, s_b)$ in which node $s_i$ is labelled with value $i$. We augment the construction $H(M,r)$ as follows: labelled graph $G(M,r,N)$ consists of $H(M,r)$, plus $S(1,N)$, plus an edge between the pivot of $H(M,r)$ and the first node $s_1$ of the path $S(1,N)$; we call $S(1,N)$ the \emph{tail} of the construction. The structure of $G(M,r,N)$ is still locally checkable in $\ldo$: any tail must eventually connect to the pivot, and the pivot can detect if there are multiple tails. The key property of the construction is that the nodes in the tail $S(1,N)$ with large labels are far from the nodes of $G(M,r)$ that are aware of $M$.

We will separate $\ld$ and $\ldo^f$ using the following language:
\[
    L = \{ G(M, r, N) : r \ge 1,\ N \ge 1, \text{ and Turing machine } M \text{ outputs } 0 \}.
\]
We have $L \in \ld$ as there will be a node $v$ with $\id(v) \ge s$ which can simulate $M$ for $s$ steps and output \emph{no} if $M$ does not output $0$. Next we will argue that $L$ cannot be in $\ldo^f$ for any small~$f$.

Let $f$ be a small oracle. For any $M$ and $r$, we can choose a sufficiently large $N$ as follows. By definition, there exists a $c$ such that for all $k$ oracle $f$ outputs some label $i \in [c]$ at least $\lceil k/c \rceil$ times on some $n \ge k$. Moreover, we can find an infinite sequence of values $k_0, k_1, \dots$ such that the most common value is some fixed $i_0$. We select w.l.o.g.\ the smallest $k_j$ and a suitable $n$ such that $f(n)$ contains at least $k_j/c \ge |H(M,r)| + 2r$ labels equal to $i_0$. Let $N = n-|H(M,r)|$, and consider $G(M,r,N)$. Now the adversary can construct the following \emph{worst-case labelling}: every node of $G(M,r,2r) \subseteq G(M,r,N)$ receives the constant input $i_0 \in [c]$; all other labels as assigned to the nodes in $S(2r+1,N) \subseteq G(M,r,N)$.

It is known that separating the following languages is undecidable (see e.g.~\cite[p.~65]{papadimitriou94computational}):
\begin{equation}\label{eq:Li}
L_i = \{ M : \text{Turing machine } M \text{ outputs } i \} : i \in \{0,1\}.
\end{equation}
For the sake of contradiction, we assume that there is an $\ldo^f$-algorithm $A$ that decides $L$. We will use algorithm $A$ and constant $i_0$ defined above to construct a sequential algorithm $B$ that separates $L_0$ and $L_1$.

Let $r$ be the running time of $A$, and consider the execution of $A$ on an instance $G(M,r,N)$ for some $M$ and $N$. It follows that each node in $S(r+1,N) \subseteq G(M,r,N)$ must always output \emph{yes}. To see this, note that the claim is trivial if $M$ halts with $0$. Otherwise we can always construct another instance $G(M_0,r,N)$ such that $M_0$ halts with $0$ and both $G(M,r,N)$ and $G(M_0,r,N)$ have the same number of nodes. Hence the oracle and the adversary can assign the same labels to $S(r+1,N)$ in both $G(M,r,N)$ and $G(M_0,r,N)$. If any of these nodes would answer \emph{no} in $G(M,r,N)$, then $A$ would also incorrectly reject the \emph{yes}-instance $G(M_0,r,N) \in L$.

Now given a Turing machine $M$, algorithm $B$ proceeds as follows. Consider the subgraph $Q = G(M,r,2r) \subseteq G(M,r,N)$, and assume the worst-case labelling of $G(M,r,N)$ in which all nodes of $Q$ have the constant label $i_0$. Algorithm $B$ cannot construct $Q$; indeed, $M$ might not halt, in which case $G(M,r,N)$ would not even exist. However, $B$ can do the following: it can assume that $M$ halts, and then generate a collection $\mathcal{Q}$ that would contain all possible radius-$r$ neighbourhoods of the nodes in $G(M,r,r)$. Collection $\mathcal{Q}$ is finite, its size only depends on $r$ and $M$, and the key observation is that $\mathcal{Q}$ is computable (in essence, $B$ enumerates all syntactically possible fixed-size fragments of partial execution tables of $M$).

Then $B$ will simulate $A$ in each neighbourhood of $\mathcal{Q}$. If $M$ halts with $1$, then $G(M,r,N) \notin L$, and therefore one of the nodes in $G(M,r,r)$ has to output \emph{no}; in this case $B$ outputs $1$. If $M$ halts with $0$, then $G(M,r,N) \in L$, and therefore one of the nodes in $G(M,r,r)$ has to output \emph{yes}; in this case $B$ outputs $0$. The key observation is that $B$ will always halt with some (meaningless) output even if we are given an input $M \notin L_0 \cup L_1$; hence $B$ is a computable function that separates $L_0$ and $L_1$. As such a $B$ cannot exist, $A$ cannot exist either. \qedhere
\end{proof}

\subsection{Large oracles capture the power of unique identifiers} \label{ssec:thm-equality}

In this section we will show that a \emph{computable} large oracle $f$ is sufficient to have $\ld \subseteq \ldo^f = \ld^f$. This result holds even if $f$ only has access to an upper bound $N \ge n$, and the adversary gets to pick an $n$-subset of labels from $f(N)$. Note that the oracle has to be computable in order for us to invert it locally.

\lemstrongldofldf*

\begin{proof}
We begin by showing how to recover an oracle $\hat{f}$ with $\hat{f}_k^{(N)} \ge k$, for all $k$ and $N \ge k$, from a large oracle $f$. We want to guarantee that each node $v$ receives a label $\ell \ge i$ if in the initial labelling it had the $i$th smallest label.

By definition, it holds for large oracles that for each natural number $\ell$ there is a largest index $i$ such that $f^{(N)}_i \le \ell$; we denote the index by $g(\ell)$. By assumption, a node with label $\ell$ can locally compute the value $g(\ell)$. We now claim that 
\[
    \hat{f}\colon N \mapsto \bigl\{ g(f_1), g(f_2), \dots, g(f_N) \bigr\}
\]
has the property $\hat{f}_k^{(N)} \ge k$. To see this, assume that we have $f^{(N)}_k = \ell$ for an arbitrary $k$. Seeing label $\ell$, node $v$ knows that, in the worst case, its own label is the $g(\ell)$th smallest. Thus for every $k$, the node with the $k$th smallest label will compute a new label at least $k$.

Now given $\hat{f}$, we can simulate any $r$-round $\ld$-algorithm $A$ as follows.
\begin{enumerate}
    \item Each node $v$ with label $\ell_v$ locally computes the new label $g(\ell_v)$.
    \item Each node gathers all labels $g(\ell_u)$ in its $r$-neighbourhood. Denote by $g_v^*$ the maximum value in the neighbourhood of $v$.
    \item Each node $v$ simulates $A$ on every unique identifier assignment to its local $r$-neighbourhood from $\{1,2,\dotsc,g_v^*\}$. If for some assignment $A$ outputs \emph{no}, then $v$ outputs \emph{no}, and otherwise it outputs \emph{yes}. 
\end{enumerate}
Because of how the decision problem is defined, it is always safe to output \emph{no} when some simulation of $A$ outputs \emph{no}. It remains to be argued that it is safe to say \emph{yes}, if all simulations say \emph{yes}. This requires that \emph{some} subset of simulations of $A$, one for each node, looks as if there had been a consistent setting of unique identifiers on the graph. Now let $\id$ be one identifier assignment with $\id(v) = i$ for the $v$ with $i$th smallest label, for all $i$ (breaking ties arbitrarily). Since by construction $g(\ell_v) \ge \id(v)$ for all $v$, there will be  a simulation of $A$ for every node $v$ with local identifier assignment $\id_v$ such that for all $u$ in the radius-$r$ neighbourhood of $v$ we have $\id_v(u) = \id(u)$.

So far we have seen how to simulate any $\ld$-algorithm $A$ with $\ldo^f$-algorithms. We can apply the same reasoning to simulate any $\ld^f$-algorithm $A$ with $\ldo^f$-algorithms; the only difference is that each node in the simulation has now access to the original oracle labels as well. \qedhere
\end{proof}

\section{Full characterisation of \texorpdfstring{$\ld^f$, $\ldo^f$, $\ld$, and $\ldo$}{LDf, LDOf, LD, and LDO}}

Our goal in this section is to complete the characterisation of the power of scalar oracles with respect to the classes $\ld$ and $\ldo$. We aim at giving a robust characterisation that holds also for minor variations in the definition of a scalar oracle. In particular, all of the key results can be adapted to weaker oracles that only receive an upper bound $N \ge n$ on the size of the graph.

\subsection{Large oracles can be stronger than identifiers}

Let us first consider large oracles. By prior work~\cite{fraigniaud13ld-id} and Lemma~\ref{lem:strong-ldof-ldf} we already know that for any computable large oracle $f$ we have a linear order
\[
    \ldo \subsetneq \ld \subseteq \ldo^f=\ld^f.
\]
Trivially, there is a large computable oracle $f(n) = (1,2,\dotsc,n)$ such that
\[
    \ldo \subsetneq \ld = \ldo^f=\ld^f.
\]
We will now show that there is also a large computable oracle $f$ such that
\[
    \ldo \subsetneq \ld \subsetneq \ldo^f=\ld^f.
\]

For a simple proof, we could consider the large oracle $f(n) = (n,n,\dotsc,n)$. Now the parity language $L$ that consists of graphs with an even number of nodes is clearly in $\ldo^f$ but not in $\ld$. However, this separation is not robust with respect to minor changes in the model of scalar oracles. In particular, if the oracle only knows an upper bound on $n$, we cannot use the parity language to separate $\ldo^f$ from $\ld$.

In what follows, we will show that the \emph{upper bound oracle} $f$ that labels all nodes with some upper bound on $N \ge n$ can be used to separate $\ldo^f$ from $\ld$. We resort again to computability arguments. The construction that we use in the proof has some elements that we do not need here (in particular, the fragment collection and the parameter $r$), but it enables us to directly reuse the same tools that we already introduced in the proof of Lemma~\ref{lem:weak-separation}.

\begin{restatable}{theorem}{thmldofldsep} \label{thm:ldof-ld-sep}
    For the upper bound oracle $f$ there exists a language $L$ such that $L \in \ldo^f \setminus \ld$.
\end{restatable}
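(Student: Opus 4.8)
The plan is to reuse the Turing-machine gadget $H(M,r)$ from the proof of Lemma~\ref{lem:weak-separation} (the execution table, the fragment collection, and the pyramid structure), but now to exploit the fact that the \emph{upper bound oracle} $f$ hands every node the same value $N\ge n$. The language we want is, roughly,
\[
    L = \{\, (H(M,r),x) : \text{every node carries the label } N,\ N \ge |H(M,r)|,\ \text{and } M \text{ outputs } 0 \,\},
\]
padded so that $|H(M,r)|$ can be taken to equal $N$ (property~(P5)). The point of attaching $N$ to every node is that it \emph{globally encodes} the size of the graph, so an identifier-oblivious algorithm equipped with $f$ can read off $N$, hence a bound on the number of steps $M$ runs, and simulate $M$ to completion — putting $L$ in $\ldo^f$. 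Conversely, a plain $\ld$ algorithm cannot do this, essentially by the same uncomputability argument as before: if it could, we could extract a computable procedure separating $L_0$ and $L_1$ from~\eqref{eq:Li}.

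First I would spell out the $\ldo^f$ algorithm. Every node runs the identifier-oblivious local checker of property~(P1) to verify that the graph is a legal $H(M,r)$ (the parameter $r$ and the machine $M$ are written into every label by property~(P4)), and in addition checks that \emph{its own} oracle label equals the common value $N$ and that $N$ is at least the number of nodes that a correctly-formed $H(M,r)$ of this shape must have — the latter is forced because the execution table has $(s+1)^2$ nodes where $s$ is the step count of $M$, so $N \ge |H(M,r)| \ge s$. Then each node, knowing $M$ and knowing $s \le N$, simulates $M$ for up to $N$ steps; this halts, and the node outputs \emph{yes} iff $M$ outputs $0$. Because the label $N$ is the \emph{same} at every node and is genuinely an upper bound on $n$ whenever the instance is well-formed, all nodes reach the same verdict and the acceptance condition is met; on malformed instances the (P1) checker catches it. Hence $L \in \ldo^f$.

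Next I would show $L \notin \ld$. Suppose an $\ld$ algorithm $A$ with locality $r$ decides $L$. As in Lemma~\ref{lem:weak-separation}, the nodes of the core $H(M,r)$ whose radius-$r$ neighbourhoods do not reach far enough to see a "large identifier witness" are exactly the nodes whose behaviour must be consistent across all $M$ with the same syntactic fragment structure; more precisely, pad $H(M,r)$ by attaching a long tail $S(1,N)$ of nodes labelled $1,\dots,N$ — wait, here there are no free labels, so instead I would pad using property~(P5) with extra structural nodes and argue that, for the purpose of the nodes in $H(M,r)$ at distance $>r$ from the padding, their radius-$r$ views are identical for two machines $M_0 \in L_0$ and $M_1 \in L_1$ once the execution tables are made long enough (this is precisely property~(P3)). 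Then one of those nodes must output \emph{no} when $M$ outputs $1$, and all of them output \emph{yes} when $M$ outputs $0$; enumerating the finitely many syntactically-possible radius-$r$ neighbourhoods of $H(M,r)$ (computable by property~(P4)) and running $A$ on each gives a computable algorithm $B$ that halts on every $M$ and separates $L_0$ from $L_1$ — a contradiction. Therefore $L \in \ldo^f \setminus \ld$.

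\textbf{The main obstacle} is the $L \notin \ld$ direction: since the upper bound oracle supplies no \emph{distinct} labels, the "large identifier" that an arbitrary (identifier-using) $\ld$ algorithm could normally exploit must come from the ordinary identifiers $\id(v)$, and I must ensure that no node of the core $H(M,r)$ sees an identifier large enough (relative to $M$'s step count) to simulate $M$ within its radius-$r$ ball. The original construction of~\cite{fraigniaud13ld-id} was designed exactly so that the nodes "aware of $M$" are far from any node that could hold such a witness; I would need to check that this separation survives the padding by structural nodes (rather than by a labelled tail), which should follow because the padding can be attached through a bounded-degree connector far from the interior, and because an $\ld$ algorithm of locality $r$ on the interior nodes cannot tell apart two padded instances $G(M_0,r,N)$ and $G(M_1,r,N)$ of equal size — the only place where care is required is verifying that equal size is achievable simultaneously with $M_0\in L_0$, $M_1\in L_1$, which is handled by padding to a common node count via~(P5).
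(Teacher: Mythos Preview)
Your proposal has a genuine gap in the $L\notin\ld$ direction: the language you define is actually \emph{in} $\ld$, so it cannot witness the separation.

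Strip away the confusion about whether $N$ is part of the input label (if it is, every node can read $N$ and simulate $M$, so $L\in\ldo\subseteq\ld$ immediately). What remains is essentially $\{\,H(M,r):M\text{ outputs }0\,\}$, possibly padded via~(P5). But this is exactly the language of~\cite{fraigniaud13ld-id}, which lies in $\ld\setminus\ldo$; in particular it is in $\ld$. The $\ld$ algorithm is simply: every node that carries the triple $(M,r,x'(v))$ simulates $M$ for $\id(v)$ steps and rejects if $M$ halts with output~$1$. Your attempt to push the large identifiers into the padding cannot defeat this. Any set of $m$ pairwise distinct positive identifiers has maximum at least $m$; since $H(M,r)$ itself already has at least $s$ nodes (property~(P2)), \emph{some node inside the core} is forced to carry an identifier $\ge s$ under \emph{every} identifier assignment, padding or not, and that node will see $M$ halt. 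The original tail construction you are alluding to absorbs \emph{oracle labels}, which the adversary may place; it cannot absorb \emph{identifiers}, which are distinct by fiat.

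The idea you are missing is a device that \emph{localises} the decision to a small, computably constructible region, independent of $M$'s running time. The paper achieves this by attaching a single extra bit $\ell\in\{0,1\}$ at the pivot and taking
\[
L=\{\,J(M,r,\ell):M\text{ halts with output }\ell\,\}.
\]
Now any node at distance more than $r$ from the pivot cannot see $\ell$; toggling $\ell$ flips membership in $L$ without changing that node's view, so such a node must always output \emph{yes}. Hence the whole decision rests on the radius-$r$ ball around the pivot. That ball depends only on the first $O(r)$ steps of $M$ and on the (bounded) fragment collection, so it can be built \emph{computably} even without knowing whether $M$ halts; one then simulates the putative $\ld$ verifier $A$ on it with an arbitrary fixed identifier assignment (valid, since $A$ must be correct for \emph{all} assignments) and obtains a computable separator of $L_0$ and $L_1$. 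Your enumeration-of-neighbourhoods argument never confronts the question of which identifiers to feed $A$; once you do, you see that you need the decision to be pinned to a region whose size does not scale with the running time of $M$, and the bare $H(M,r)$ construction does not give you that.
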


\begin{proof}
  The following language is not in $\ld$ but is in $\ldo^f$. Recall the Turing machine construction $H(M,r)$ from Lemma~\ref{lem:weak-separation}. We augment it so that the pivot receives an extra label $\ell \in \{0,1\}$; let us denote such a construction by $J(M,r,\ell)$. Let
    \[
        L = \{ J(M,r,\ell) : r \ge 1,\ \ell \in \{0,1\}, \text{ and Turing machine } M \text{ halts outputs } \ell \}.
    \]
    First, observe that $L$ is in $\ldo^f$. Checking the structure of $H(M,r)$ and hence $J(M,r,\ell)$ is known to be in $\ldo$. Since the execution table of $M$ is contained in $J(M,r,\ell)$, it must halt within $n \le N$ steps. Finally, since the pivot is guaranteed to receive an $N\ge n$ as its oracle label, it can simulate $M$ for at most $N$ steps and determine whether $M$ halts with output~$\ell$.

    Next, we show that $L \notin \ld$. Suppose otherwise. Fix a local verifier $A$ that decides $L$, and let $r$ be the running time of~$A$. Consider a node $v$ that is within distance more than $r$ from the pivot. For such a node, algorithm $A$ must always output \emph{yes}---otherwise we could change the input label $\ell$ so that an answer \emph{no} is incorrect. Thus one of the nodes within distance $r$ from the pivot must be able to tell whether $J(M,r,\ell)$ is a \emph{no} instance.

    Using $A$, we can now design a sequential algorithm $B$ that solves the undecidable problem of separating the languages $L_i$ from~\eqref{eq:Li}. Given a Turing machine $M$, algorithm $B$:
    \begin{itemize}[noitemsep]
        \item constructs $J(M,r,1)$ up to distance $2r$ from the pivot,
        \item assigns the unique identifiers arbitrarily in this neighbourhood,
        \item simulates $A$ for each node within distance $r$ from the pivot.
    \end{itemize}
    Note that $B$ can essentially simulate $M$ for $2r$ steps to construct $J(M,r,1)$ up to distance $2r$ from the pivot; the construction is correct if $M$ halts, and it terminates even if $M$ does not halt. Now $J(M,r,1)$ is a \emph{no}-instance if and only if $M$ halts with output $0$. In this case one of the nodes within distance $r$ from the pivot has to output \emph{no}; otherwise all of them have to output \emph{yes}. In the former case $B$ outputs $0$, otherwise it outputs $1$. Clearly $B$ outputs $\ell$ for each $M \in L_\ell$. However, such an algorithm $B$ cannot exist. Therefore $A$ cannot exist, either, and we have $L \notin \ld$. \qedhere
\end{proof}

\begin{remark}
    The construction that we use above has some additional elements that were not necessary (in particular, the fragment collection and parameter $r$). However, this construction made it possible to directly reuse the same tools that we already introduced in the proof of Lemma~\ref{lem:weak-separation}.
\end{remark}

\subsection{Small oracles and identifiers are incomparable}

\begin{figure}[t]
    \centering
    \includegraphics[page=1]{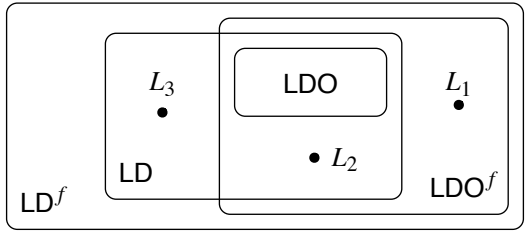}
    \caption{There is a small oracle $f$ such that each of the languages $L_i$ exists.}\label{fig:small}
\end{figure}

In the case of small oracles, we already know that $\ldo^f \subsetneq \ld^f$ for any small oracle $f$ by Lemma~\ref{lem:weak-separation}. Next we characterise the relationship of $\ldo^f$ and $\ld$. In essence, we show that these classes are incomparable.

\begin{theorem}\label{thm:small-ldof-ld}
    There is a single small oracle $f$ so that each of the languages $L_1$, $L_2$, and $L_3$ shown in Figure~\ref{fig:small} exist.
\end{theorem}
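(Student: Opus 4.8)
The plan is to fix the small oracle $f(n)=(0,\dots,0,n)$ --- $n-1$ copies of $0$ together with one copy of $n$ --- so that in every $n$-node network exactly one node, the \emph{beacon}, learns $n$ exactly while every other node learns nothing. This $f$ is small: for $c=1$ and any $k\ge 1$, taking $n=k+1$ gives $f_k^{(n)}=0<1$. I will exhibit $L_1\in\ldo^f\setminus\ld$, $L_2\in\ld\setminus\ldo^f$, and $L_3\in\ld^f\setminus(\ld\cup\ldo^f)$; together these witness that all four inclusions $\ldo\subseteq\ldo^f$, $\ldo\subseteq\ld$, $\ldo^f\subseteq\ld^f$, $\ld\subseteq\ld^f$ are strict for $f$, that $\ld$ and $\ldo^f$ are incomparable, and that $\ld^f$ is strictly above $\ld\cup\ldo^f$ --- exactly the diagram for small oracles.

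Two of the languages are immediate. For $L_1$ take \emph{parity} (graphs of even order): every node with oracle label $0$ outputs \emph{yes}, and the beacon (the unique node with a non-zero label, necessarily $n$) outputs \emph{yes} iff $n$ is even; this $0$-round identifier-oblivious algorithm decides parity, while parity is not in $\ld$ by the standard argument that no fixed-radius algorithm, even with identifiers, can distinguish a long even path from a long odd path. For $L_2$ there is nothing to prove: $f$ is small, so Lemma~\ref{lem:weak-separation} already supplies a language in $\ld\setminus\ldo^f$.

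The work lies in $L_3$. I reuse the construction $G(M,r,N)$ from the proof of Lemma~\ref{lem:weak-separation} --- the labelled graph $H(M,r)$ with its pivot, plus a tail $S(1,N)$ glued to the pivot --- and set
\[
  L_3=\bigl\{\,G(M,r,N):\ r\ge 1,\ N\ge 1,\ M\text{ halts with output }0,\ \text{and }|V(G(M,r,N))|\text{ is even}\,\bigr\}.
\]
For $L_3\in\ld^f$: the structure of $G(M,r,N)$ is locally checkable in $\ldo$; by~\ref{con:p2}, $H(M,r)$ has enough nodes that one of them carries an identifier at least the number of steps of $M$, and that node knows $M$ by~\ref{con:p4}, so it runs $M$ to completion and outputs \emph{no} unless $M$ halts with $0$; finally the beacon checks the parity of $n$. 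For $L_3\notin\ld$: with identifiers only, $n=|V(H(M,r))|+N$ cannot be determined locally --- a node of $H(M,r)$ can recover $|V(H(M,r))|$ but not $N$, a tail node knows $N$ but neither $M$ nor $|V(H(M,r))|$, and these parts lie $\omega(1)$ apart; a routine neighbourhood-matching argument against an $r$-round algorithm, comparing a \emph{yes}-instance $G(M_0,r,N_0)$ of even order with the \emph{no}-instance $G(M_0,r,N_0+1)$, makes this precise. For $L_3\notin\ldo^f$: follow the reduction of Lemma~\ref{lem:weak-separation}, taking the construction parameter equal to the running time $r$ of a hypothetical algorithm $A$. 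In the worst-case labelling of $G(M,r,N)$, for $N$ slightly above $2r$, the oracle label $n$ lands on a far-tail node and every core node (those of $H(M,r)$ together with $s_1,\dots,s_r$) receives the label $0$, so no core node learns $n$; a view-matching argument against padded, even-order \emph{yes}-instances built from a tiny output-$0$ machine shows that every far-tail node --- \emph{including} the one carrying the label $n$ --- must output \emph{yes}. Hence, for a tail length that makes $n$ even, $G(M,r,N)$ is rejected precisely when $M$ does not output $0$, and a core node then does the rejecting. The radius-$r$ neighbourhoods of core nodes (all carrying oracle label $0$) form a finite set computable from $(M,r)$ --- the constantly many syntactic $r$-fragments of $H(M,r)$ plus the explicitly constructible pivot and inner-tail neighbourhoods --- so a sequential algorithm can enumerate them, simulate $A$, and thereby decide the undecidable separation of $L_0$ and $L_1$ from~\eqref{eq:Li}, a contradiction.

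The hard part is this last step, and within it the interplay between the parity clause and the uncomputability reduction: the reduction cannot compute $|V(H(M,r))|\bmod 2$, hence cannot itself choose a tail length making $n$ even, yet the argument requires one. The resolution is that the set of radius-$r$ core neighbourhoods is unchanged under a shift of the tail length by $1$, so the reduction enumerates it unconditionally while the correctness proof invokes the good (even-$n$) tail length non-constructively. A secondary nuisance --- arranging the \emph{yes}-instances used in the matching arguments to have exactly the prescribed order $n$ --- is handled by the padding property (P5) of the construction.
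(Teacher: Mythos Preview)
Your proposal targets a different placement of $L_2$ and $L_3$ than the paper does. From the paper's own proof, the figure has $L_1\in\ldo^f\setminus\ld$, $L_2\in(\ld\cap\ldo^f)\setminus\ldo$, and $L_3\in\ld\setminus\ldo^f$; you instead put $L_2\in\ld\setminus\ldo^f$ and attempt an additional $L_3\in\ld^f\setminus(\ld\cup\ldo^f)$. So your $L_2$ is the paper's $L_3$, your proof omits the paper's $L_2$ entirely, and your $L_3$ is an extra claim the paper does not make.

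On the approach itself, the paper takes a very different route: it fixes the \emph{uncomputable} small oracle $f(n)=(0,\dots,0,b_n)$, where $b_n$ encodes halting information for the first $n$ machines. This makes $L_1$ immediate (a labelled path $P(M)$ is accepted iff $M$ halts; the beacon reads this off $b_n$), and $L_2$ is just the language $\{H(M,r):M\text{ outputs }0\}$, which is already known to lie in $\ld\setminus\ldo$, and the beacon can use $|b_n|=n$ as a step bound to place it in $\ldo^f$. Your computable oracle $f(n)=(0,\dots,0,n)$ with $L_1=\text{parity}$ is more elementary, but the paper explicitly avoids parity-based separations because they are not robust to the upper-bound variant of oracles. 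In short: the paper buys simplicity and robustness by allowing an uncomputable oracle; you buy a computable oracle at the cost of robustness and a much harder $L_3$.

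Your $L_3$ argument has two real gaps. First, the step ``$L_3\notin\ld$'' is not the routine comparison you describe: in $G(M_0,r,N_0)$ versus $G(M_0,r,N_0+1)$ the tail-end nodes carry different input labels ($N_0$ versus $N_0+1$) and hence have distinguishable views, so you cannot simply match them. A correct argument needs two auxiliary yes-instances---one sharing the $H$-side with the no-instance (same $M_0$, different tail parity) and one sharing the tail-end (different $M$ with $|H|$ of opposite parity, same $N$)---and a single identifier assignment on the no-instance that simultaneously matches both; stitching these consistently across the overlap region is the actual work, and you have not done it. Second, in the ``$L_3\notin\ldo^f$'' reduction you write ``$N$ slightly above $2r$'', but then there is no tail position that is simultaneously far from the pivot and far from the tail end, so the beacon cannot be placed where its view matches a yes-instance of the \emph{same} size $n$; you need $N$ large. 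The strategy is salvageable, but as written the parameters are inconsistent with the view-matching you invoke.
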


\begin{proof}
    Let $f$ be the small oracle
    \[
        f(n) = (0,0,\dotsc,0,b_n),
    \]
    where $b_n$ is an $n$-bit string such that the $i$th bit tells whether the $i$th Turing machine halts. We construct the languages as follows:
    \begin{itemize}
        \item[$L_1$:]
            Let $P(n)$ denote the labelled path of length $n$ such that each node has two input labels: $n$ and the distance to a specified leaf node $v_0$. The correct structure of $P(n)$ is in $\ldo$. Now let
            \[
                L_1 = \{ P(M) : \text{Turing machine } M \text{ halts} \}. 
            \]
            The node that receives the $n$-bit oracle label can use it to decide whether the $n$th Turing machine halts, and therefore $L_1 \in \ldo^f$. Conversely, we have $L_1 \notin \ld$; otherwise we would have a sequential algorithm that solves the halting problem for each Turing machine $M$ by constructing the path $P(M)$ with some fixed identifier assignment and simulating the local verifier.
        \item[$L_2$:]
            We can use the same language
            \[
                L_2 = \{ H(M,r) : r \ge 1 \text{ and Turing machine } M \text{ outputs 0}\}
            \]
            that we used in the proof of Lemma~\ref{lem:weak-separation}. It is known that $L_2 \in \ld$ and $L_2 \notin \ldo$ \cite{fraigniaud13ld-id}. Since checking the structure of $H(M,r)$ is in $\ldo$, it suffices to note that the node that receives the bit vector $b_n$ of length $n$ can use the \emph{length} of the vector as an upper bound in simulating $M$. Thus $L_2 \in \ldo^f$.
        \item[$L_3$:]
            Apply Lemma~\ref{lem:weak-separation}. \qedhere
    \end{itemize}
\end{proof}

We conclude by noting that Theorem~\ref{thm:small-ldof-ld} is also robust to minor variations in the definitions. In particular, the oracle does not need to know the exact value of $n$; it is sufficient that at least one node receives the bit string $b_N$, where $N \ge n$ is some upper bound on $n$.

\section*{Acknowledgements}

Thanks to Laurent Feuilloley for discussions.

\bibliographystyle{plainnat}
\bibliography{articles}

\end{document}